\newtheorem{Lemma}{Lemma}
\newtheorem{Theorem}{THEOREM}
\theoremstyle{definition}
\newtheorem{defn}{DEFINITION}
\theoremstyle{definition}
\theoremstyle{definition}
\newcommand\ord{{\Theta}}
\newcommand{\R}{\mathbb{R}}
\newcommand{\N}{\mathbb{N}}
\newcommand{\be}{\begin{equation}}
\newcommand{\ee}{\end{equation}}
\newcommand{\bea}{\begin{align}}
\newcommand{\eea}{\end{align}}
\newcommand\infspec{{\rm{inf\, spec\,}}}
\newcommand\V{\mathcal{V}}
\DeclareMathOperator{\sgn}{sgn}
\DeclareMathOperator{\tr}{tr} 
\DeclareMathOperator{\const}{const}
\begin{document}
\title{The critical temperature for the BCS equation at weak coupling}
\thanks{\copyright\,2007 by the authors.
This paper may be reproduced, in its entirety, for non-commercial
purposes.}

\author[R.L. Frank]{Rupert L. Frank} \address{Rupert L. Frank,
  Department of Mathematics, Royal Institute of Technology, 100 44
  Stockholm, Sweden. {\it Current address:} Department of Mathematics,
  Princeton University, Princeton NJ 08544, USA }
\email{rlfrank@math.princeton.edu}

\author[C. Hainzl]{Christian Hainzl} 
\address{Christian Hainzl, Departments
  of Mathematics and Physics, UAB,  1300 University
  Blvd, 
Birmingham AL 35294, USA} 
\email{hainzl@math.uab.edu}

\author[S. Naboko]{Serguei Naboko} 
\address{Serguei Naboko, Department of Mathematics, UAB, 1300 University
  Blvd, 
  Birmingham, AL 35294, USA. {\it Permanent address:} Department of Mathematical
  Phy\-sics, St.\ Petersburg University, St.\ Petersburg, Russia}
\email{naboko@math.uab.edu}

\author[R. Seiringer]{Robert Seiringer$^\dagger$}
\thanks{$^\dagger$ R.S. acknowledges partial support by U.S. NSF grant PHY-0353181 and by an
A.P. Sloan Fellowship.}
\address{Robert Seiringer, Department of Physics, Princeton University,
Princeton NJ 08542-0708, USA} 
\email{rseiring@math.princeton.edu}

\date{October 22, 2007}

\begin{abstract}
  For the BCS equation with local two-body interaction $\lambda V(x)$, we
  give a rigorous analysis of the asymptotic behavior of the critical
  temperature as $\lambda \to 0$. We derive necessary and sufficient
  conditions on $V(x)$ for the existence of a non-trivial solution for all
  values of $\lambda>0$.
\end{abstract}

\maketitle

\section{Introduction}

The BCS model has played a prominent role in condensed matter physics
in the fifty years since its introduction \cite{BCS}. Originally
introduced as a model for electrons displaying superconductivity, it
has recently also been used to describe dilute cold gases of fermionic
atoms in the case of weak interactions among the atoms
\cite{Leg,NS,rand,andre,parish,Chen,zwerger}.  We will not be
concerned here with a mathematical justification of the approximations
leading to the BCS model, but rather with an investigation of its
precise predictions.

We consider the BCS equation for a Fermi gas at chemical potential $\mu
\in \R$ and temperature $T>0$, with local two-body interaction
$2\lambda V(x)$. Here, $\lambda>0$ denotes a coupling constant, and
the factor $2$ is introduced for convenience. Because of the many
different applications of the BCS equation, it is important to keep the
discussion as general as possible. Our only assumption on the
interaction potential $V$ will be that it is real-valued and $V\in
L^1(\R^3)\cap L^{3/2}(\R^3)$.

It was shown in \cite{HHSS} that the existence of a non-trivial
solution to the BCS gap equation
\begin{equation}\label{bcse}
\Delta(p) = -\frac \lambda{(2\pi)^{3/2}} \int_{\R^3} \hat V(p-q)
\frac{\Delta(q)}{E(q)} \tanh \frac{E(q)}{2T} \, dq
\end{equation}
with $E(p)= \sqrt{(p^2-\mu)^2 + |\Delta(p)|^2}$ at some temperature
$T>0$ is {\it equivalent} to the fact that a certain {\it linear}
operator has a negative eigenvalue. Here, $\hat V(p)=(2\pi)^{-3/2}
\int V(x) e^{-ipx} dx$ denotes the Fourier transform of $V(x)$.  In
particular, it was shown that this property holds for $T$ less than a
certain {\it critical temperature}, which we denote by $T_c$, whereas
there are no non-trivial solutions to Eq.~\eqref{bcse} for $T\geq T_c$.
According to the usual interpretation of solutions to the
BCS gap equation, the system displays superfluid behavior for all
temperatures $T<T_c$, while it is in a normal phase for $T\geq T_c$.
The analysis in \cite{HHSS} shows that $T_c$ is non-zero for purely
attractive (i.e., non-positive) $V$, and exponentially small in $\lambda$.

The fact that the critical temperature in the {\it non-linear} BCS
equation can be expressed in terms of spectral properties of a linear
operator allows for a more thorough investigation of its properties.
This is the purpose of this paper.  In particular, we shall be
concerned here with the asymptotic behavior of $T_c$ at weak coupling,
i.e., for small $\lambda$.  We shall derive necessary and sufficient
conditions on $V$ for the positivity of $T_c$ for {\it all} $\lambda>0$, as
well as its precise asymptotics as $\lambda\to 0$. The precise
statement of our results is given in Theorem~\ref{asymptofcrittemp}
below.

The linear operator one is led to analyze is of the form
$K_{T,\mu}+\lambda V$ where $K_{T,\mu}$ is a multiplication operator
in momentum space that represents an `effective' kinetic energy. By a
modification of the Birman-Schwinger principle we need to study the
diverging part of the compact operator
\begin{equation}\label{eq:bsop}
(\sgn V) |V|^{1/2} K_{T,\mu}^{-1} |V|^{1/2}
\end{equation}
as $T\to 0$. Note that, if $V$ is not of definite sign, then the latter
operator is not self-adjoint and standard perturbation arguments based on
the variational principle will fail. Still we are able to give a
variational characterization for the leading behavior of the critical
temperature in the weak coupling limit.

Our analysis is somewhat similar in spirit to that of the lowest
eigenvalue of the Schr\"odinger operator $p^2+\lambda V$ in \emph{two}
space dimensions, see~\cite{simon}. This latter case is considerably simpler,
however, since $p^2$ has a unique minimum at $p=0$, whereas
$K_{T,\mu}(p)$ takes its minimal value on the Fermi sphere $p^2=\mu$.
Technically, this is reflected in the fact that the singular part of
the Birman-Schwinger operator $(\sgn V) |V|^{1/2} (p^2+T)^{-1}
|V|^{1/2}$ is of rank one in contrast to that of \eqref{eq:bsop},
which is of infinite rank. In particular, the difficulties stemming
from the non-selfadjointness are not present in the case of $p^2+\lambda
V$.

We would like to emphasize that our approach is not restricted to the
kinetic energy $K_{T,\mu}$ appearing in the BCS model, but can be adopted
to any symbol vanishing on a manifold of codimension one or higher.
Operators of this form arise naturally in various fields of Mathematical
Physics, e.g. in the quantum-mechanical description of particles in a
homogeneous magnetic field or in the analysis of trapped modes in
elasticity theory \cite{F,FW,LSW,Sob}.

\section{Mail results and discussion}

According to the analysis in \cite{HHSS}, the critical temperature
in the BCS model is, in appropriate units, given by the following
expression.

\begin{defn}
For $\mu \in \R$ and $T>0$, let $K_{T,\mu}$ be the multiplication operator in momentum space
$$
K_{T,\mu}(p) = \left( p^2 -\mu\right) \frac { e^{(p^2-\mu)/T}+1
}{e^{(p^2-\mu)/T}-1}\, .
$$
Let $V \in L^1(\R^3) \cap L^1(\R^3)$ be real-valued. The critical
temperature in the BCS model is given by 
\begin{equation}\label{deftc}
T_c(V) = \inf \left\{ T>0\, : \, \infspec \left( K_{T,\mu} + V \right) \geq
0\right\}\,.
\end{equation}
\end{defn}
\medskip

More precisely, it was shown in \cite{HHSS} that Eq.~(\ref{bcse}) has
a non-trivial solution for $T<T_c(\lambda V)$, whereas for $T\geq
T_c(\lambda V)$ it doesn't.  Note that $K_{T,\mu} \geq 2 T$, and that the
essential spectrum of $K_{T,\mu} + V$ is $[2T,\infty)$. Hence, in case
$T_c(V)>0$, it is the largest $T$ such that $K_{T,\mu}+V$ has a zero
eigenvalue.  Note also that $K_{T,\mu}$ becomes $|p^2-\mu|$ as $T\to
0$.

We assume that $\mu>0$ henceforth. For weak potentials $V$, the
critical temperature is determined by the behavior of the potential on
the Fermi sphere $\Omega_\mu$, the sphere in momentum space with
radius $\sqrt{\mu}$.  We denote the Lebesgue measure on $\Omega_\mu$
by $d\omega$.

Let $\V_\mu: \,
L^2(\Omega_\mu) \to L^2(\Omega_\mu)$ be the  self-adjoint operator
\begin{equation}\label{defvm}
\big(\V_\mu u\big)(p) = 
\frac 1{(2\pi)^{3/2}} \frac 1{\sqrt{\mu}}\int_{\Omega_\mu}\hat V(p-q) u(q) \,d\omega(q)\,.
\end{equation}
We note that $\V_\mu$ is non-vanishing if $\hat V(p)$ does not vanish
identically for $|p|\leq 2\sqrt\mu$. Since $V\in L^1(\R^3)$ by
assumption, $\hat V$ is a bounded continuous function, and hence
$\V_\mu$ is a Hilbert-Schmidt operator. It is, in fact, trace class,
as will be shown below, and its trace equals
$\frac{\sqrt{\mu}}{2\pi^2} \int V(x)dx$.

Let $a_{\mu}(V)= \infspec(\V_\mu)$ denote the infimum of the spectrum
of $\V_\mu$.  Since $\V_\mu$ is compact, we have $a_\mu(V)\leq 0$.
Note that, in particular, $a_\mu(V)$ is negative if the trace of
$\V_\mu$ is negative, that is, $a_\mu(V)<0$ if $\hat V(0) =
(2\pi)^{-3/2} \int V(x) dx < 0$. Moreover, by considering a trial
function that is supported on two small sets on the Fermi sphere
separated a distance $|p|$, it is easy to see that $a_\mu(V)<0$ if
$|\hat V(p)|>\hat V(0)$ for some $p$ with $|p|<2\sqrt{\mu}$.

Our main result concerning the critical temperature (\ref{deftc}) is as follows.

\begin{Theorem}\label{asymptofcrittemp}
  Let $V\in L^{3/2}(\R^3)\cap L^1(\R^3)$ be real-valued, and let
  $\lambda>0$.
\begin{itemize}
\item[(i)] Assume that $a_{\mu}(V)<0$. Then $T_c(\lambda V)$ is non-zero for
all $\lambda >0$, and
\begin{equation}\label{crittempformula}
\lim_{\lambda\to 0}   \lambda\, \ln \frac{\mu}{T_c(\lambda V)} =
-\frac{1}{a_{\mu}(V)} \,.
\end{equation}
\item[(ii)] Assume that $a_{\mu}(V)= 0$. If $T_c(\lambda V)$ is
  non-zero, then $\ln (\mu/{T_c(\lambda V)})\geq c \lambda^{-2}$ for some $c>0$ and 
  small $\lambda$.
\item[(iii)] If there exists an $\epsilon>0$ such that $a_{\mu}(V-\epsilon|V|)= 0$,
then $T_c(\lambda V) = 0$ for small enough $\lambda$.
\end{itemize}
\end{Theorem}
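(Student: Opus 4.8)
The common engine is the Birman--Schwinger principle: for $K_{T,\mu}>0$, the operator $K_{T,\mu}+\lambda V$ has $\inf\mathrm{spec}<0$ (resp. $=0$) precisely when the compact operator $\mathcal B_T:=(\sgn V)\,|V|^{1/2}K_{T,\mu}^{-1}|V|^{1/2}$ has an eigenvalue $<-1/\lambda$ (resp. $=-1/\lambda$ but none below); since the nonzero eigenvalues of $\mathcal B_T$ coincide with those of the self-adjoint $K_{T,\mu}^{-1/2}VK_{T,\mu}^{-1/2}$, they are real. The analytic input I would establish first --- this is where the singular behaviour of $K_{T,\mu}^{-1}$ as $T\to0$ is resolved --- is a decomposition
\[
 |V|^{1/2}K_{T,\mu}^{-1}|V|^{1/2}=\ln\!\tfrac\mu T\;\mathcal C^{*}\mathcal C+\mathcal R_T,\qquad \sup_{0<T\le1}\|\mathcal R_T\|<\infty,
\]
where $\mathcal C\colon L^2(\R^3)\to L^2(\Omega_\mu)$ is the trace on the Fermi sphere composed with $|V|^{1/2}$, so that $\mathcal C(\sgn V)\mathcal C^{*}$ is a positive multiple of $\V_\mu$ and $\mathcal C\mathcal C^{*}$ a positive multiple of $\V_\mu^{|V|}\ge0$ (here $\V_\mu^{W}$ denotes \eqref{defvm} with $W$ in place of $V$); in particular the bottom of the spectrum of $(\sgn V)\mathcal C^{*}\mathcal C$ is $a_\mu(V)$. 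The hypothesis $V\in L^{3/2}(\R^3)$ enters through the Sobolev trace estimates making $\mathcal C$ Hilbert--Schmidt and $\mathcal R_T$ uniformly bounded.

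\emph{Part (i) ($a_\mu(V)<0$).} Dividing the eigenvalue equation $\ln\tfrac\mu T\,(\sgn V)\mathcal C^{*}\mathcal C\,\phi+(\sgn V)\mathcal R_T\phi=-\tfrac1\lambda\phi$ by $\ln\tfrac\mu T$ exhibits $(\sgn V)\mathcal R_T$ as an $O(1/\ln\tfrac\mu T)$ perturbation of $\ln\tfrac\mu T\,(\sgn V)\mathcal C^{*}\mathcal C$, whose lowest eigenvalue is $a_\mu(V)\ln\tfrac\mu T$. For the upper bound on $T_c(\lambda V)$ I would test with $\mathcal C^{*}u_0$, $u_0$ a ground state of $\V_\mu$, keeping track of $\mathcal R_T$; for the lower bound, project onto the spectral subspace of $(\sgn V)\mathcal C^{*}\mathcal C$ below $a_\mu(V)/2$ (finite-dimensional) and solve the reduced equation, using that the perturbation is bounded and the relevant eigenvalue isolated for small $T$, so the lack of self-adjointness of $\mathcal B_T$ is harmless. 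Both bounds give $-\tfrac1\lambda=(a_\mu(V)+o(1))\ln\tfrac{\mu}{T_c(\lambda V)}$, i.e. \eqref{crittempformula}; in particular $T_c(\lambda V)>0$ for every $\lambda>0$.

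\emph{Part (ii) ($a_\mu(V)=0$).} Now $\V_\mu\ge0$, so $(\sgn V)\mathcal C^{*}\mathcal C\ge0$ and the leading term produces no negative eigenvalue; a negative eigenvalue of $\mathcal B_T$ can arise only from the interaction of $\mathcal R_T$ with the (infinite-dimensional) near-kernel of $\mathcal C^{*}\mathcal C$. Quantitatively, if $\mathcal B_T\phi=e\phi$ with $e<0$, write $u=\mathcal C\phi$: multiplying the equation by $\sgn V$ and applying $\mathcal C$ gives $(\kappa\ln\tfrac\mu T\,\V_\mu-e)u=-\mathcal C(\sgn V)\mathcal R_T\phi$ for a constant $\kappa>0$, and since $\V_\mu\ge0$ and $e<0$ the operator on the left is $\ge|e|$, so $\|u\|\lesssim\|\phi\|/|e|$; then $|e|\,\|\phi\|\le\ln\tfrac\mu T\,\|\mathcal C^{*}u\|+C\|\phi\|\lesssim\ln\tfrac\mu T\,\|\phi\|/|e|+C\|\phi\|$, whence $|e|\lesssim\sqrt{\ln\tfrac\mu T}$. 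So if $T_c(\lambda V)>0$ then $1/\lambda\lesssim\sqrt{\ln(\mu/T_c(\lambda V))}$, i.e. $\ln(\mu/T_c(\lambda V))\ge c\lambda^{-2}$ for small $\lambda$.

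\emph{Part (iii).} The condition $a_\mu(V-\epsilon|V|)=0$ is \emph{strictly} stronger than $a_\mu(V)=0$: shrinking $\epsilon$ into $(0,1)$, it reads $\V_\mu^{V}-\epsilon\,\V_\mu^{|V|}\ge0$, i.e. $(1-\epsilon)\V_\mu^{V_+}\ge(1+\epsilon)\V_\mu^{V_-}$ for $V=V_+-V_-$; this extra room removes the $\sqrt{\ln}$-growth of Part (ii). It suffices to find $C>0$ independent of $T$ with $K_{T,\mu}^{-1/2}VK_{T,\mu}^{-1/2}\ge-C$ for all $T>0$, for then $K_{T,\mu}+\lambda V\ge0$ for all $T>0$ when $\lambda\le1/C$, i.e. $T_c(\lambda V)=0$. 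Since $CK_{T,\mu}+V_+>0$, this bound is equivalent to $V_-^{1/2}(CK_{T,\mu}+V_+)^{-1}V_-^{1/2}\le1$. By the resolvent identity,
\[
 V_-^{1/2}(CK_{T,\mu}+V_+)^{-1}V_-^{1/2}=C^{-1}\mathcal M^{--}_T-C^{-2}\mathcal M^{-+}_T(1+C^{-1}\mathcal M^{++}_T)^{-1}\mathcal M^{+-}_T,\quad \mathcal M^{ab}_T:=V_a^{1/2}K_{T,\mu}^{-1}V_b^{1/2},
\]
and the decomposition above (with $V_a^{1/2}$ and $V_b^{1/2}$ on the two sides) gives $\mathcal M^{ab}_T=\ln\tfrac\mu T\,\mathcal C_a^{*}\mathcal C_b+\mathcal R^{ab}_T$, where $\mathcal C_a=\gamma_\mu V_a^{1/2}$ with $\gamma_\mu$ the Fermi-sphere trace, and $\mathcal C_+\mathcal C_-^{*}=0$ (disjoint supports). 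Setting aside the terms that contain a remainder $\mathcal R^{ab}_T$ and using the identity $\mathcal C_+(1+\alpha\mathcal C_+^{*}\mathcal C_+)^{-1}\mathcal C_+^{*}=(1+\alpha\mathcal C_+\mathcal C_+^{*})^{-1}\mathcal C_+\mathcal C_+^{*}$, the two logarithmically divergent contributions cancel down to the bounded operator $C^{-1}\ln\tfrac\mu T\,\mathcal C_-^{*}(1+C^{-1}\ln\tfrac\mu T\,\mathcal C_+\mathcal C_+^{*})^{-1}\mathcal C_-$, whose norm --- by transferring $\mathcal C_-\mathcal C_-^{*}\le\frac{1-\epsilon}{1+\epsilon}\mathcal C_+\mathcal C_+^{*}$ from the hypothesis, sandwiching, and using $s/(1+C^{-1}\ln\tfrac\mu T\,s)\le C/\ln\tfrac\mu T$ --- is $\le\frac{1-\epsilon}{1+\epsilon}<1$, uniformly in $T$ and $C$. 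The remaining and, I expect, hardest point is to bound, uniformly in $T$, all terms containing a factor $\mathcal R^{ab}_T$ (including those produced by expanding $(1+C^{-1}\mathcal M^{++}_T)^{-1}$) by $\frac{2\epsilon}{1+\epsilon}-o(1)$; the dangerous ones pair a $\ln\tfrac\mu T$ with an $\mathcal R^{ab}_T$, and are tamed using that $K_{T,\mu}\equiv 2T$ on $\Omega_\mu$, hence $\mathcal C_+(1+C^{-1}\mathcal M^{++}_T)^{-1}=2T\,\gamma_\mu(K_{T,\mu}+C^{-1}V_+)^{-1}V_+^{1/2}$, which trades the spurious logarithm for a factor $T$. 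For $C$ large this yields $V_-^{1/2}(CK_{T,\mu}+V_+)^{-1}V_-^{1/2}\le1$ for all $T>0$, hence $T_c(\lambda V)=0$ for $\lambda\le1/C$.
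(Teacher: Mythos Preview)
Your setup and the core analytic input --- the Birman--Schwinger reduction and the decomposition of $|V|^{1/2}K_{T,\mu}^{-1}|V|^{1/2}$ into $\ln(\mu/T)$ times a Fermi-sphere operator plus a uniformly bounded remainder --- coincide with the paper's (its Lemmas~\ref{decomp} and~\ref{aequivbvdv}). For parts~(i) and~(ii) your arguments are close variants of the paper's; where you test, project, and manipulate the eigenvalue equation directly, the paper derives a single resolvent estimate that handles both at once. Writing $JX$ for your $(\sgn V)\,\mathcal C^{*}\mathcal C$ (so $X=\mathcal C^{*}\mathcal C\ge 0$), the identity
\[
 (\alpha JX - z)^{-1} = -z^{-1} + \alpha z^{-1}\,JX^{1/2}\,(\alpha X^{1/2}JX^{1/2}-z)^{-1}\,X^{1/2}
\]
and the self-adjointness of $X^{1/2}JX^{1/2}$ give $\|(\alpha JX-z)^{-1}\|\le d^{-1}+\alpha\|X\|d^{-2}$ for $\mathrm{dist}(z,\sigma(\alpha JX))=d$; a Neumann series then places $\sigma(\alpha JX+\lambda Y_T)$ within $O(\sqrt{\alpha\lambda})+O(\lambda)$ of $\sigma(\alpha JX)$, uniformly in $T$. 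This yields~(i) and~(ii) simultaneously and bypasses the non-self-adjointness more cleanly than your projection/Feshbach scheme (which, as written, is a sketch: e.g.\ ``testing'' a non-self-adjoint $\mathcal B_T$ does not directly bound its spectrum --- one should test the self-adjoint $K_{T,\mu}+\lambda V$ instead).

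For part~(iii), your $V_\pm$-splitting via the resolvent identity is a genuinely different route, and you correctly flag it as incomplete: the cross terms pairing $\ln(\mu/T)$ with an $\mathcal R^{ab}_T$ are exactly the delicate ones, and your rewriting of $\mathcal C_+(1+C^{-1}\mathcal M^{++}_T)^{-1}$ via $K_{T,\mu}|_{\Omega_\mu}=2T$ still leaves the unbounded trace $\gamma_\mu$ applied to $(K_{T,\mu}+C^{-1}V_+)^{-1}V_+^{1/2}$, whose uniform boundedness in $T$ is not obvious. The paper's argument is much shorter and avoids all of this. The hypothesis $a_\mu(V-\epsilon|V|)=0$ is equivalent, via the isospectrality of $\V_\mu$ and $JX$, to $X^{1/2}JX^{1/2}\ge\epsilon X$ on $L^2(\R^3)$. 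One then factors
\[
 1+B_T=(1+\lambda Y_T)\bigl(1+\alpha\,(1+\lambda Y_T)^{-1}JX\bigr),\qquad \alpha=\lambda\ln(1+\mu/2T),
\]
and notes that $(1+\lambda Y_T)^{-1}JX$ has the same spectrum as the \emph{self-adjoint} operator $X^{1/2}(J+\lambda JY_T)^{-1}X^{1/2}$ (here $JY_T$ is self-adjoint because $JB_T$ is and $J^2=1$). Since
\[
 X^{1/2}(J+\lambda JY_T)^{-1}X^{1/2}=X^{1/2}JX^{1/2}-\lambda\,X^{1/2}Y_T(J+\lambda JY_T)^{-1}X^{1/2}\ge \bigl(\epsilon-O(\lambda)\bigr)X\ge0
\]
for small $\lambda$ (uniformly in $T$), the second factor above is invertible for all $\alpha\ge0$, so $-1\notin\sigma(B_T)$ for any $T>0$ and hence $T_c(\lambda V)=0$. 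No splitting $V=V_+-V_-$ is needed, and the logarithmic divergence is absorbed harmlessly into the single scalar prefactor~$\alpha$.
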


Note that Eq.~\eqref{crittempformula} implies that, in case
$a_\mu(V)<0$, the critical temperature has the asymptotic behavior
$$
T_c(\lambda V) \sim \mu e^{ 1/(\lambda a_{\mu}(V))}$$ in the limit of
small $\lambda$. 
%We use the notation $O(t)$ to indicate an expression
%that is bounded as $c t\leq O(t)\leq C t$ for constants $0<c<C$. 
%
On the other hand, if $a_\mu(V)=0$ then part (ii) of Theorem~\ref{asymptofcrittemp} implies that
 $T_c(\lambda V)$ is at most as big as
$e^{-\const/\lambda^2}$ for some positive constant. If $a_\mu(V)$
remains zero if $\epsilon |V(x)|$ is subtracted from $V(x)$, then
$T_c(\lambda V)=0$ for small enough $\lambda$, and there is no
superfluid phase at weak coupling.

Although we restrict our attention to local potentials $V$ here, we
remark that a similar analysis can be applied in the case of non-local
potentials as well.

\subsection{Radial Potentials.}
In the special case of radial potentials $V$, depending only on
$|x|$, the spectrum of $\V_\mu$ can be determined more explicitly.
Since $\V_\mu$ commutes with rotations in this case, all its
eigenfunctions are given by spherical harmonics. For $\ell$ a
non-negative integer, the eigenvalues of $\V_\mu$ are then given by
$\frac{\sqrt{\mu}}{2\pi^2} \int V(x) |j_\ell(\sqrt\mu |x|)|^2 dx$,
with $j_\ell$ denoting the spherical Bessel functions. These
eigenvalues are $(2\ell+1)$ fold degenerate.  In particular, we then
have
$$
a_\mu(V) = \inf_{\ell \in \N} \, \frac{\sqrt{\mu}}{2\pi^2} \int V(x)
\left|j_\ell(\sqrt\mu |x|)\right|^2 dx\,.
$$
in the case of radial potentials $V$. 
We remark that
$\sum_{\ell\in\N}(2\ell+1)|j_\ell(r)|^2=1$, hence the expression for
the trace of $\V_\mu$ stated above is recovered.

If $\hat V$ is non-positive, it is easy to see that the infimum is
attained at $\ell=0$.  This follows since the lowest eigenfunction can
be chosen non-negative in this case, and is thus not orthogonal to the
constant function. Since $j_0(r)=\sin(r)/r$, this means that $a_\mu(V)
=(2\pi^2 \sqrt\mu)^{-1} \int V(x) \frac{\sin^2(\sqrt{\mu}|x|)}{|x|^2}
dx$ for radial potentials $V$ with non-positive Fourier transform.

In the limit of small $\mu$ we can use the asymptotics $j_\ell(r)
\approx r^\ell/(2\ell+1)!!$ to observe that, in case $\int V(x)dx<0$,
$a_\mu(V) \approx \frac{\sqrt\mu}{2\pi^2} \int V(x) dx$ as $\mu\to 0$.
Note that $(\lambda/4\pi)\int V(x) dx$ is the first Born approximation
to the {\it scattering length} of $2\lambda V$, which we denote by
$a_0$. Thus, replacing $\lambda a_\mu(V)$ by $2\sqrt\mu a_0/\pi$ and
writing $\mu = k_{\rm f}^2$, we arrive at the expression
$T_c\sim e^{\pi/(2k_{\rm f}a_0)}$ for the critical temperature, which
is well established in the physics literature \cite{gorkov,NS,zwerger}.

In the remainder of this paper, we shall give the proof of
Theorem~\ref{asymptofcrittemp}.

\section{Proof of Theorem~\ref{asymptofcrittemp}}

Note that in case $T_c>0$, the essential spectrum of
$K_{T_c,\mu}+\lambda V$ starts at $2T_c>0$, and hence $T_c$ is the
largest $T$ such that $0$ is an eigenvalue of $K_{T,\mu}+\lambda V$ in
this case. Therefore there exists an eigenstate $|\psi\rangle\in
L^2(\R^3)$ such that $K_{T_c,\mu} |\psi\rangle = -\lambda V
|\psi\rangle$. For a (not necessarily sign-definite) potential $V(x)$ let
us use the notation
\begin{equation*}
V(x)^{1/2} = (\sgn V(x)) |V(x)|^{1/2} \,.
\end{equation*}
The Birman-Schwinger principle then implies that $|\varphi\rangle =
V^{1/2} |\psi\rangle$ satisfies $B_{T_c} |\varphi\rangle = -
|\varphi\rangle$, where
\begin{equation}\label{defofbt}
B_T = \lambda V^{1/2}K_{T,\mu}^{-1}|V|^{1/2}\,.
\end{equation}
Conversely, if $B_T|\varphi\rangle=-|\varphi\rangle$ and
$|\psi\rangle=K_{T,\mu}^{-1}|V|^{1/2}|\varphi\rangle$, then
$|\psi\rangle\in L^2(\R^3)$ and $K_{T,\mu}|\psi\rangle=-\lambda
V|\psi\rangle$.  The existence of a zero eigenvalue for $K_{T,\mu}
+\lambda 
V$ is thus equivalent to the fact that $B_T$ has an eigenvalue $-1$.
Note that $B_T$ is not a self-adjoint operator, however. 

With the aid
of the Birman-Schwinger operator $B_T$, we can thus state the
following alternative characterization of the critical temperature
$T_c(\lambda V)$.

\begin{Lemma}\label{l1}
  For any $T>0$, the Birman-Schwinger operator $B_T$ defined in
  (\ref{defofbt}) is Hilbert-Schmidt and has real spectrum. If
  $T_c(\lambda V)>0$, the smallest eigenvalue of $B_{T_c}$ equals
  $-1$. Moreover, in case $T_c(\lambda V)=0$, the spectrum of $B_T$ is
  contained in $(-1,\infty)$ for any $T>0$.
\end{Lemma}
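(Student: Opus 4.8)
My plan is to establish the four assertions in turn, using throughout the coupling‑constant version of the Birman--Schwinger correspondence set up above, together with elementary monotonicity and convexity properties of $K_{T,\mu}$. I would begin with the two statements that hold at fixed $T$. For the Hilbert--Schmidt property I would use the pointwise bound $K_{T,\mu}(p)^{-1}\le C_{T,\mu}\,(1+p^2)^{-1}$, valid for all $p$ because $K_{T,\mu}(p)\ge 2T$ and $K_{T,\mu}(p)\ge |p^2-\mu|$; this reduces the claim to the Hilbert--Schmidt property of $|V|^{1/2}(1+p^2)^{-1}|V|^{1/2}$, whose squared Hilbert--Schmidt norm is at most $C\|V\|_{3/2}^2$ by the Hardy--Littlewood--Sobolev inequality, since the squared kernel of $(1+p^2)^{-1}$ is pointwise dominated by a constant times $|x-y|^{-2}$, which lies in weak $L^{3/2}(\R^3)$. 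This is where the hypothesis $V\in L^{3/2}(\R^3)$ enters. For the reality of the spectrum, set $A_T:=\lambda|V|^{1/2}K_{T,\mu}^{-1}|V|^{1/2}$, a non‑negative compact operator, so that $B_T=(\sgn V)A_T=\big((\sgn V)A_T^{1/2}\big)A_T^{1/2}$; since $\sigma(XY)\setminus\{0\}=\sigma(YX)\setminus\{0\}$ for bounded operators, taking $X=(\sgn V)A_T^{1/2}$ and $Y=A_T^{1/2}$ gives $\sigma(B_T)\setminus\{0\}=\sigma\big(A_T^{1/2}(\sgn V)A_T^{1/2}\big)\setminus\{0\}$, which is real because $A_T^{1/2}(\sgn V)A_T^{1/2}$ is self‑adjoint ($V$ being real‑valued).

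For the last two statements I would isolate three ingredients. (a) $T\mapsto K_{T,\mu}(p)$ is strictly increasing for every $p$, and $K_{T,\mu}-K_{T_0,\mu}$ is a bounded multiplication operator depending norm‑continuously on $T$; together with the infinitesimal form‑boundedness of $V$ relative to $K_{T,\mu}$ (from $V\in L^{3/2}$), this makes $\infspec(K_{T,\mu}+\lambda V)$ finite, continuous and non‑decreasing in $T$. (b) $\kappa\mapsto\infspec(K_{T,\mu}+\kappa V)$ is concave, being an infimum of functions affine in $\kappa$. (c) By the Birman--Schwinger principle, for $e<0$ one has $e\in\sigma(B_T)$ iff $0$ is an eigenvalue of $K_{T,\mu}+\kappa V$ with $\kappa=-\lambda/e>0$; in particular $B_T$ has an eigenvalue $\le-1$ iff $0\in\sigma(K_{T,\mu}+\kappa V)$ for some $\kappa\in(0,\lambda]$.

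Now suppose $T_c(\lambda V)>0$. By definition $\infspec(K_{T,\mu}+\lambda V)<0$ for $T<T_c$ and $\ge0$ for $T\ge T_c$, so by continuity in (a) it equals $0$ at $T_c$; since the essential spectrum of $K_{T_c,\mu}+\lambda V$ equals $[2T_c,\infty)$ and hence lies strictly above $0$, this gives a $0$‑eigenvalue of $K_{T_c,\mu}+\lambda V$, hence by (c) $-1\in\sigma(B_{T_c})$, an eigenvalue since $B_{T_c}$ is compact. That $-1$ is the \emph{smallest} eigenvalue follows from (b): since $\infspec(K_{T_c,\mu})=2T_c>0$ and $\infspec(K_{T_c,\mu}+\lambda V)=0$, concavity forces $\infspec(K_{T_c,\mu}+\kappa V)\ge(1-\kappa/\lambda)\,2T_c>0$ for $\kappa\in(0,\lambda)$, so $0\notin\sigma(K_{T_c,\mu}+\kappa V)$ there, and by (c) no eigenvalue of $B_{T_c}$ is $<-1$. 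If instead $T_c(\lambda V)=0$, then $\infspec(K_{T,\mu}+\lambda V)\ge0$ for all $T>0$, and I claim this is strict: if $\infspec(K_{T_0,\mu}+\lambda V)=0$ for some $T_0>0$, then $0$ lies below the essential spectrum, so there is a normalized eigenfunction $\psi_0$ at $0$, and for $T<T_0$ the boundedness and strict negativity of $K_{T,\mu}-K_{T_0,\mu}$ give $\langle\psi_0,(K_{T,\mu}+\lambda V)\psi_0\rangle=\langle\psi_0,(K_{T,\mu}-K_{T_0,\mu})\psi_0\rangle<0$, whence $T_c(\lambda V)\ge T_0>0$, a contradiction. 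Thus $\infspec(K_{T,\mu}+\lambda V)>0$ for every $T>0$, and then (b) upgrades this to $\infspec(K_{T,\mu}+\kappa V)>0$ for every $\kappa\in(0,\lambda]$; by (c), $\sigma(B_T)$ contains no point $\le-1$, so $\sigma(B_T)\subset(-1,\infty)$.

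The main obstacle, in my view, is not any single computation but the two passages from a non‑strict to a strict conclusion: that $-1$ is the \emph{least} eigenvalue when $T_c>0$, and that the spectrum misses $-1$ altogether when $T_c=0$. Both are driven by the same device — concavity of $\infspec(K_{T,\mu}+\kappa V)$ in the coupling $\kappa$, anchored at the strictly positive value $\infspec(K_{T,\mu})=2T$ at $\kappa=0$ — but the $T_c=0$ case additionally needs the strict $T$‑monotonicity of $K_{T,\mu}$ and the boundedness of its increments to turn the hypothesis $T_c=0$ into the strict inequality $\infspec(K_{T,\mu}+\lambda V)>0$ for all $T>0$; getting this chain of implications exactly right, with the correct role of the essential spectrum at each step, is the delicate part.
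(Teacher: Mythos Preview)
Your proof is correct, and for the first two assertions (Hilbert--Schmidt, real spectrum) it matches the paper's reasoning essentially verbatim. For the last two assertions, however, you take a genuinely different route. The paper stays on the Birman--Schwinger side and varies $T$: it argues that the eigenvalues of the compact operator $B_T$ depend continuously on $T$ and tend to $0$ as $T\to\infty$, so an eigenvalue of $B_{T_c}$ below $-1$ would have to cross $-1$ at some $T>T_c$, contradicting the Birman--Schwinger correspondence and the definition of $T_c$; the $T_c=0$ case is handled ``in the same way.'' You instead pass back to the Schr\"odinger side and vary the coupling: concavity of $\kappa\mapsto\infspec(K_{T,\mu}+\kappa V)$, anchored at the strictly positive value $2T$ at $\kappa=0$, rules out zero eigenvalues of $K_{T,\mu}+\kappa V$ for $\kappa\in(0,\lambda)$ (respectively $(0,\lambda]$), which via the coupling--constant Birman--Schwinger correspondence excludes eigenvalues of $B_T$ in $(-\infty,-1)$ (respectively $(-\infty,-1]$). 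Your approach has the advantage of giving an explicit quantitative lower bound and of avoiding any appeal to continuity of eigenvalues of a \emph{non-self-adjoint} family; the paper's approach is shorter and needs no discussion of form domains or of the boundedness of $K_{T,\mu}-K_{T_0,\mu}$.
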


\begin{proof}
  The Hilbert-Schmidt property follows from the
  Hardy-Littlewood-Sobolev inequality \cite[Thm.~4.3]{LL}, using that
  $V\in L^{3/2}(\R^3)$ and that $K_{T,\mu} \geq \const (1+
  p^2)$. Moreover, $B_T$ is the product of a self-adjoint operator
  (multiplication by $\sgn(V(x))$) and a non-negative operator, hence it has
  real spectrum.

  We have already shown above that $-1$ is an eigenvalue of $B_{T_c}$
  in case $T_c>0$. Moreover, because of strict monotonicity of
  $K_{T,\mu}$ in $T$, $-1$ is not an eigenvalue of $B_T$ for all $T >
  T_c$. This implies that $B_{T_c}$ has no eigenvalue less than $-1$,
  for otherwise there would be a $T>T_c$ for which $B_T$ has
  eigenvalue $-1$ since the eigenvalues of $B_T$ depend continuously
  on $T$ and approach $0$ as $T\to \infty$.

  In the same way, one argues that $B_T$ does not have an eigenvalue
  less than or equal to $-1$ if $T_c=0$.
\end{proof}

Let $J$ be the unitary operator that multiplies by $\sgn(V(x))$. To be
precise, we define $\sgn(V(x))=1$ in case $V(x)=0$. Moreover, let $X$
denote the self-adjoint operator on $L^2(\R^3)$ with integral kernel
$$
X(x,y)= |V(x)|^{1/2}  \frac1{2\pi^2} \frac{\sin\sqrt\mu|x-y|}{|x-y|} |V(y)|^{1/2}\,.
$$
We note that the $X$ is a non-negative trace-class operator, with trace $\tr [X]
=\frac{\sqrt\mu}{2\pi^2} \int |V(x)| dx$. Hence also $JX$ is trace-class, and
$\tr [JX] =\frac{\sqrt\mu}{2\pi^2}\int V(x) dx$.
Define $Y_T$ by
\begin{equation}\label{defY}
B_T = \lambda \ln\left(1+\frac\mu{2T}\right) J X + \lambda Y_T \,.
\end{equation}
We have

\begin{Lemma}\label{decomp}
Let $V\in L^1(\R^3)\cap L^{3/2}(\R^3)$.  Then, for any $T>0$, the
operator $Y_T$ defined in (\ref{defY}) is Hilbert-Schmidt, and its
Hilbert-Schmidt norm is bounded
uniformly in $T$, i.e., $\sup_{T>0} \tr[ Y_T^\dagger Y_T]\! <\! \infty$.
\end{Lemma}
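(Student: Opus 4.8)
The plan is to write $Y_T$ explicitly as an integral operator and estimate its Hilbert--Schmidt norm by splitting the kernel of $K_{T,\mu}^{-1}$ into the singular piece (which is captured by $\ln(1+\mu/(2T))\,JX$) and a remainder that is bounded in $L^2(\R^3\times\R^3)$ uniformly in $T$. Concretely, from \eqref{defofbt} and \eqref{defY} we have, after stripping off $\lambda$ and the factor $J$,
\begin{equation*}
|V(x)|^{1/2}\Big( K_{T,\mu}^{-1} - \ln\!\big(1+\tfrac\mu{2T}\big)\,\tfrac1{2\pi^2}\tfrac{\sin\sqrt\mu|x-y|}{|x-y|}\Big)|V(y)|^{1/2}\,.
\end{equation*}
Passing to momentum space, the operator in brackets has kernel
$\frac1{(2\pi)^3}\int e^{ip(x-y)}\big(K_{T,\mu}(p)^{-1}-\ell_T\,\delta_{\Omega_\mu}\big)$, where $\ell_T=\ln(1+\mu/(2T))$ and $\frac1{2\pi^2}\frac{\sin\sqrt\mu|x-y|}{|x-y|}$ is recognized as the Fourier transform of the surface measure $d\omega$ on $\Omega_\mu$ (up to the normalization $(2\pi)^{-3/2}\mu^{-1/2}$ appearing in \eqref{defvm}). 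So the first step is to verify this Fourier identity and fix all constants so that the singular subtraction is exactly right.

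The heart of the matter is then a pointwise-in-$T$ bound on the function
\begin{equation*}
m_T(p) := \frac1{K_{T,\mu}(p)} - \frac{\ell_T}{|p^2-\mu|}\,\chi(|p^2-\mu|\le \mu)
\end{equation*}
(or a similar comparison operator), showing it defines an operator with kernel whose $L^2$-norm after multiplication by $|V|^{1/2}$ on each side stays bounded as $T\to0$. Since $K_{T,\mu}(p)\ge |p^2-\mu|$ and $K_{T,\mu}(p)\ge 2T$, and since $K_{T,\mu}(p)=|p^2-\mu|(1+O(e^{-|p^2-\mu|/T}))$ away from the Fermi surface, the difference $m_T$ is integrable near $\Omega_\mu$ (the logarithmic divergence of $\int_{|p^2-\mu|<\mu} |p^2-\mu|^{-1}\,dp$ has been subtracted by construction) and decays at infinity. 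The key estimate is that $\sup_{T>0}\|m_T\|$ in the relevant norm is finite; I would prove this by writing $m_T$ as a radial integral, changing variables to $s=p^2-\mu$, and bounding $\int_{-\mu}^{\mu}\big|\frac{1}{K_{T,\mu}}-\frac{\ell_T}{|s|}\big|\,ds$ using the explicit form $K_{T,\mu}= s\coth(s/2T)$ — the elementary inequality $0\le \frac1{s\coth(s/2T)}-\frac1{|s|}\le \frac{C}{T}\mathbf 1_{|s|\le T}$ together with the fact that $\ell_T\sim\ln(\mu/2T)$ makes the two logarithmic contributions cancel to leading order, leaving an $O(1)$ remainder.

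With $\sup_T\|m_T\|_\infty$-type control in hand (more precisely, uniform control of the Hilbert--Schmidt norm of $|V|^{1/2}\,\widehat{m_T}(\cdot)\,|V|^{1/2}$), I invoke the Hardy--Littlewood--Sobolev inequality exactly as in the proof of Lemma~\ref{l1}: since $V\in L^{3/2}(\R^3)$ and the remaining convolution kernel has the right decay, the product $|V|^{1/2}(\cdots)|V|^{1/2}$ is Hilbert--Schmidt with norm controlled by $\|V\|_{3/2}$ times $\sup_T$ of the multiplier bound. Multiplying by the unitary $J$ does not change the Hilbert--Schmidt norm, so $\sup_{T>0}\tr[Y_T^\dagger Y_T]<\infty$ follows. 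The main obstacle is the delicate cancellation in the previous paragraph: one must extract precisely the coefficient $\ell_T=\ln(1+\mu/(2T))$ of the singular part so that $m_T$ has no logarithmic blow-up, and then show the leftover is not merely finite for each $T$ but \emph{uniformly} bounded as $T\to0$; handling the region $|p^2-\mu|$ comparable to $T$, where $K_{T,\mu}$ transitions from $\sim 2T$ to $\sim|p^2-\mu|$, requires the careful elementary estimate on $s\coth(s/2T)$ sketched above.
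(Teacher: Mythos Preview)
Your momentum-space subtraction does not work as written. The function $\ell_T\,|p^2-\mu|^{-1}\chi(|p^2-\mu|\le\mu)$ is not locally integrable near the Fermi surface (its radial integral $\int_0^{\sqrt{2\mu}} k^2/|k^2-\mu|\,dk$ diverges), so it does not define an operator, and your $m_T(p)$ inherits a non-integrable singularity of order $-\ell_T/|p^2-\mu|$ at $|p|=\sqrt\mu$, since $K_{T,\mu}(p)^{-1}\le 1/(2T)$ is bounded there. Hence the quantity $\int_{-\mu}^{\mu}\big|K_{T,\mu}^{-1}-\ell_T/|s|\big|\,ds$ you propose to bound is in fact $+\infty$; and your ``elementary inequality'' $0\le (s\coth(s/2T))^{-1}-|s|^{-1}$ has the wrong sign (from $\coth\ge 1$ one gets $s\coth(s/2T)\ge |s|$, so the difference is $\le 0$) and the wrong magnitude (for $|s|\ll T$ it behaves like $-1/|s|$). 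The underlying problem is that the singular piece $\ell_T\,\delta_{\Omega_\mu}$ is a surface \emph{measure}, not a function of $p$; it cannot be captured by a pointwise multiplier comparison, and in particular HLS cannot be applied to the remainder.

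The paper avoids this by passing to position space after a single momentum cutoff. First the region $|p|>\sqrt{2\mu}$ is split off and handled by HLS using $V\in L^{3/2}$. For $|p|\le\sqrt{2\mu}$ the kernel is the radial integral $(2\pi^2)^{-1}\!\int_0^{\sqrt{2\mu}} k\,K_{T,\mu}(k)^{-1}\,\tfrac{\sin k|x-y|}{|x-y|}\,dk$. Two successive replacements are made, each producing a kernel bounded \emph{pointwise} in $L^\infty(\R^3\times\R^3)$ uniformly in $T$: write $K_{T,\mu}=(|k^2-\mu|+2T)g(|k^2-\mu|/T)$ and drop $g$ (use $|\sin kr|\le \sqrt{2\mu}\,r$); then replace $\sin k|x-y|$ by $\sin\sqrt\mu|x-y|$ (use $|\sin a-\sin b|\le|a-b|$). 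After the second step the remaining radial integral is $\int_0^{\sqrt{2\mu}} k/(|k^2-\mu|+2T)\,dk=\ln(1+\mu/(2T))$ exactly, yielding $\ell_T X$. The $L^\infty$ kernel bounds combine with $V\in L^1$ (not $L^{3/2}$) via $\|V^{1/2}M|V|^{1/2}\|_2\le \|V\|_1\sup_{x,y}|M(x,y)|$ to give the uniform HS estimate. The key idea you are missing is that the logarithm arises from integrating the \emph{regularized} symbol $(|k^2-\mu|+2T)^{-1}$ over a shell, while the oscillatory factor $\sin k|x-y|$ is frozen at $k=\sqrt\mu$; no singular function of $p$ is ever subtracted.
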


The proof of this lemma will be given in the next section.
Lemma~\ref{decomp} shows that the singular part of the operator
$B_T$ as $T\to 0$ is given by $JX$. This observation will 
enable us to recover the exact asymptotics of $T_c(\lambda V)$ as
$\lambda \to 0$.

The operator $JX$ is closely related to $\V_\mu$ defined in
(\ref{defvm}). In fact, the two operators are isospectral.

\begin{Lemma}\label{aequivbvdv}
The spectrum of $JX$ on $L^2(\R^3)$ equals the
spectrum of $\V_\mu$ on $L^2(\Omega_\mu)$. 
\end{Lemma}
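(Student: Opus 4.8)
The plan is to factor both operators through a common intermediate object, namely a map between $L^2(\R^3)$ and $L^2(\Omega_\mu)$ built from $|V|^{1/2}$ and the trace onto the Fermi sphere. Recall that the kernel of $X$ is $|V(x)|^{1/2}\,\frac{1}{2\pi^2}\frac{\sin\sqrt\mu|x-y|}{|x-y|}\,|V(y)|^{1/2}$, and the key classical identity is that $\frac{1}{2\pi^2}\frac{\sin\sqrt\mu|x-y|}{|x-y|}$ is, up to the constant $(2\pi)^{-3}\cdot$(something), the Fourier transform of the surface measure $d\omega$ on $\Omega_\mu$; concretely, $\frac{1}{2\pi^2}\frac{\sin\sqrt\mu|x-y|}{|x-y|} = \frac{1}{(2\pi)^3}\int_{\Omega_\mu} e^{i p\cdot(x-y)}\,\frac{d\omega(p)}{\sqrt\mu}$. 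This lets me write $X = A^\dagger A$ where $A\colon L^2(\R^3)\to L^2(\Omega_\mu)$ is, up to normalization, $(Af)(p) = (2\pi)^{-3/2}\mu^{-1/4}\int_{\R^3} e^{-ip\cdot x}|V(x)|^{1/2} f(x)\,dx$, i.e. $A = \mathcal{F}[|V|^{1/2}\,\cdot\,]$ restricted to $\Omega_\mu$ and suitably scaled.

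Next I would introduce the sign: write $J = $ multiplication by $\sgn V$, so that $V^{1/2} = |V|^{1/2}\sgn V$ in the notation of the paper. Then $JX = \sgn(V)\,|V|^{1/2}(\cdots)|V|^{1/2}$, and one checks that $JX = \tilde A^\dagger A$ with $\tilde A f = A(\sgn(V) f)$, or more symmetrically that $JX$ is similar to $A J^{-1} A^\dagger \cdot J = $ ... — the cleanest route is the general fact that for bounded operators $A,B$ between Hilbert spaces, $AB$ and $BA$ have the same nonzero spectrum. Apply this with $B = A$ acting $L^2(\R^3)\to L^2(\Omega_\mu)$ and the other factor $M\colon L^2(\Omega_\mu)\to L^2(\R^3)$ chosen so that $MB = JX$ on $L^2(\R^3)$; then $BM$ acts on $L^2(\Omega_\mu)$, and a short computation with the Fourier kernel shows $BM = \V_\mu$ exactly, matching the normalization $\frac{1}{(2\pi)^{3/2}}\frac{1}{\sqrt\mu}$ and the convention $\hat V(p) = (2\pi)^{-3/2}\int V(x)e^{-ipx}dx$ in \eqref{defvm}. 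Since both $JX$ and $\V_\mu$ are compact with $0$ in their spectrum (the space $L^2(\R^3)$ is infinite-dimensional, and $\V_\mu$ is compact), the nonzero spectra coincide and hence the full spectra coincide.

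The main obstacle is bookkeeping rather than anything conceptual: one must get the constants and the $\sgn V$ placement exactly right so that $BM$ is literally $\V_\mu$ and not merely similar to it, and one must make sure the operators $A$, $M$ are genuinely bounded between the stated spaces. Boundedness of $A\colon L^2(\R^3)\to L^2(\Omega_\mu)$ requires that $p\mapsto \widehat{|V|^{1/2}f}(p)$ restricted to a sphere lie in $L^2(d\omega)$; this is a trace/restriction estimate that follows from $V\in L^{3/2}$ via the same Hardy–Littlewood–Sobolev / Stein–Tomas-type input already used to establish the Hilbert–Schmidt property of $B_T$, together with the fact that $X$ has already been shown to be trace class (so $A$ is Hilbert–Schmidt, hence bounded). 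Once boundedness is in hand, the identity $AB$–$BA$ isospectrality is standard and the proof is essentially the explicit Fourier computation identifying $BM$ with $\V_\mu$.
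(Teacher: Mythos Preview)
Your proposal is correct and follows essentially the same route as the paper: factor $JX = BA$ and $\V_\mu = AB$ via the restriction-to-sphere map $A$ and a signed inverse-Fourier map $B$ (your $M$), then invoke the standard $AB$/$BA$ isospectrality together with compactness on infinite-dimensional spaces to include $0$. The paper's execution is slightly leaner in two respects: it builds the sign $\sgn V$ directly into the second factor rather than first writing $X=A^\dagger A$ and adjusting, and it dispenses with any restriction-theorem input by observing that $|V|^{1/2}\psi\in L^1(\R^3)$ (Cauchy--Schwarz, $V\in L^1$) so its Fourier transform is continuous and restricts to $\Omega_\mu$ trivially.
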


\begin{proof}
Let $A: L^2(\R^3)\mapsto L^2(\Omega_\mu)$ denote the operator which
maps $\psi\in L^2(\R^3)$ to the Fourier transform of
$|V|^{1/2}\psi$, restricted to the sphere $\Omega_\mu$. Note that
$|V|^{1/2}\psi\in L^1(\R^3)$ and hence it has a bounded and continuous
Fourier transform. Moreover, let $B: L^2(\Omega_\mu)\mapsto
L^2(\R^3)$ be defined by
$$
(Bu)(x) = V(x)^{1/2} \frac 1{(2\pi)^{3/2}}\frac 1{\sqrt\mu}
\int_{\Omega_\mu} u(p) e^{ipx}
\, d\omega(p)\,.
$$
Using the fact that $\int_{\Omega_\mu} e^{ipx}
d\omega(p)=4\pi\sqrt{\mu}|x|^{-1}\sin \sqrt{\mu}|x|$ it is easy to see
that $JX = BA$, while $AB = \V_\mu$. Hence they have the same
spectrum, except possibly at zero. 
Indeed, if $AB|f\rangle=\lambda |f\rangle$ with $\lambda\neq 0$,  then
$|g\rangle=B|f\rangle \neq 0$ and $BA|g\rangle=\lambda |g\rangle$.
Since both operators are Hilbert-Schmidt operators on
infinite-dimensional spaces, 0 is an element of both spectra.
\end{proof}

We now study the behavior of the spectrum of $JX$ under small
perturbations. We will show that for $\alpha>0$ the spectrum of
\begin{equation}\label{vme}
\alpha J X  + \lambda Y_T
\end{equation}
differs from the spectrum of $\alpha J X$ by at most
$\ord(\sqrt{\alpha\lambda})+ \ord(\lambda)$, uniformly in $T$. Here and in
the following, we use the notation $\ord(t)$ to indicate an expression
that is bounded as $c t\leq \ord(t)\leq C t$ for constants $0<c<C$.

Pick a $z$ that stays away a distance $d$ from the spectrum of $\alpha
JX$. By expanding in a Neumann series, we see that $\alpha J X +
\lambda Y_T-z$ has a bounded inverse provided
$$
\lambda \| Y_T \| \,  \left\| \frac 1 {\alpha JX - z}\right\| < 1\,.
$$
We have
$$
\frac 1{\alpha JX - z} = -\frac 1z +\frac \alpha z  J X^{1/2} \frac {1}{\alpha X^{1/2}J X^{1/2} - z} X^{1/2}\,.
$$
Since $X^{1/2}J X^{1/2}$ is a self-adjoint operator having the same
spectrum as $JX$, we can bound $\|1/(\alpha X^{1/2}J X^{1/2} - z)\|
\leq 1/d$ for any $z$ a distance $d$ away from the spectrum of $\alpha
JX$. We conclude that $\left\|(\alpha JX - z)^{-1}\right\| \leq 1/d+\alpha \| X\|/d^2$.
Hence $z$ is not in the spectrum of (\ref{vme}) if $d \geq
\ord(\sqrt{\lambda \alpha})+\ord(\lambda)$.

Since the spectrum of $\alpha JX + \lambda Y_T$ depends continuously
on $\lambda$, we have thus proved the claim. In particular, it follows
that the lowest eigenvalue of \eqref{vme} equals the lowest eigenvalue
of $\alpha JX$ plus terms that are at most of order $\ord(\sqrt{\alpha\lambda})+
\ord(\lambda)$.

We now have the necessary prerequisites to give the proof of
Theorem~\ref{asymptofcrittemp}.

\begin{proof}[Proof of Part (i)]
According to Lemma~\ref{aequivbvdv}, we have $a_{\mu}(V)=\infspec
JX$. Assume now that $\infspec JX<0$. Since $Y_T$ is bounded uniformly
in $T$, we see that the spectrum of $B_T = \lambda \ln(1+\mu/2T) JX +
\lambda Y_T$ becomes arbitrarily negative for $T\to 0$, and hence
$T_c(\lambda V)>0$ for any $\lambda >0$.
Moreover, $\lambda \ln (1+\mu/2T_c)$ is bounded away from zero as
$\lambda\to 0$.

We have shown above that the lowest eigenvalue of~$B_T$ is bounded
from above and below by $\lambda a_{\mu}(V)\ln(1+\mu/2T) +
\ord(\sqrt{\lambda})$, uniformly in $T$. Since at $T=T_c$ this lowest
eigenvalues equals $-1$, we conclude Eq.~(\ref{crittempformula}).
\end{proof}

\begin{proof}[Proof of Part (ii)]
For $T>0$, let $\alpha = \lambda \ln (1+\mu/2T)$ for simplicity. Under
the assumption that the spectrum of $J X$ is non-negative, the lowest
eigenvalue of $B_T = \alpha J X + \lambda Y_T$
is bigger than $-\ord( \sqrt{\alpha \lambda})$, as shown above. This immediately
implies that $B_T$ can only have an eigenvalue $-1$ if
$\alpha\lambda \geq \ord(1)$, or $\ln(\mu/T) \geq \ord(1/\lambda^2)$ for small
$\lambda$.
\end{proof}

\begin{proof}[Proof of Part (iii)]
Let again $\alpha = \lambda \ln(1+\mu/2T)$, and recall that $B_T = \alpha J X + \lambda Y_T$. 
Since the operator $1+\lambda Y_T$ is
invertible for small enough $\lambda$, we are able to rewrite
$$
1+B_T = (1+\lambda Y_T)\left(1+(1+\lambda
Y_T)^{-1}\alpha JX \right)\,.
$$
Hence $1+B_T$ does not have a zero eigenvalue for any
$\alpha\geq 0$ if the spectrum of
$(1+\lambda Y_T)^{-1} J X$ is non-negative. Note that $JY_T$ is self-adjoint, since $JB_T$ is
self-adjoint and $J^2=1$. Hence $(1+\lambda Y_T)^{-1} J X$  has the same spectrum
as the self-adjoint operator
\begin{equation}\label{aax}
X^{1/2} \frac1{ J + \lambda J Y_T } X^{1/2}\,.
\end{equation}
This operator is non-negative for small $\lambda$ if $X^{1/2}J X^{1/2}
\geq \epsilon X$ for some $\epsilon>0$, since then 
$$
\eqref{aax} = X^{1/2} J X^{1/2} 
- \lambda X^{1/2}
 Y_T  \frac1{ J + \lambda JY_T }X^{1/2}  \geq X 
\left(\epsilon - \lambda  \| Y_T \| \, \| (1 + \lambda Y_T )^{-1}\|\right)\,.
$$

Note that the range of $X$ is dense in the range of $X^{1/2}$, and
hence it is enough to check the inequality $J\geq \epsilon$ on the
range of $X$. Let $|\psi\rangle$ be in the range of $X$, i.e.,
$|\psi\rangle = X |\phi\rangle$ for some $|\phi\rangle\in L^2(\R^3)$.
Then $\langle \psi |J \psi\rangle \geq \epsilon \langle
\psi|\psi\rangle$ is equivalent to the statement that, for
$|\chi\rangle = |V|^{1/2}|\phi\rangle$,
\begin{align*}\nonumber
   &\int_{\Omega_\mu\times \Omega_\mu}
  \overline{\hat \chi(p)} \hat V(p-q) \hat\chi(q) \,d\omega(p)\, d\omega(q) \\
  & \geq \epsilon \int_{\Omega_\mu\times \Omega_\mu}
  \overline{\hat \chi(p)} \widehat{|V|}(p-q) \hat\chi(q)\, d\omega(p)\,
  d\omega(q) \,.
\end{align*}
This, in turn, is equivalent to $a_\mu(V-\epsilon
|V|)=0$. Under this assumption, we have thus shown that, for small
enough $\lambda$, the operator $B_T$ does not have an eigenvalue $-1$,
for arbitrary $T> 0$. Together with Lemma~\ref{l1}, this proves the claim.
\end{proof}

\section{Proof of Lemma~\ref{decomp}}
By scaling we may assume that $\mu=1$, and we set $K_{T,1}=K_T$
for simplicity. The operator $K_{T}$ can be rewritten as
\begin{equation*}
K_{T}(p) = (|p^2-1|+2T) g(|p^2-1|/T)
\end{equation*}
where $g(t)=t(1+e^{-t})/((t+2)(1-e^{-t}))$. The
integral kernel of $K_T^{-1}$ is given by
$$
K_T^{-1}(x,y) = \frac 1{(2\pi)^{3}} \int_{\R^3} \frac{e^{ip(x-y)}}{(|p^2-1|+2T)
g(|p^2-1|/T)} dp\,.
$$
We decompose $K_T(p)$ as $K_T(p)^{-1} = L_T^{(1)}(p) + M_T^{(1)}(p)$,
where $L_T^{(1)}(p) = \theta(\sqrt{2}-|p|) K_T(p)^{-1}$ and
$M_T^{(1)}(p) = \theta(|p|-\sqrt{2}) K_T(p)^{-1}$. 
Since $b=\inf_t g(t)>0$ one has
\begin{equation*}
M_T^{(1)}(p) \leq  \theta(|p|-\sqrt{2}) b^{-1} |p^2-1|^{-1}\,.
\end{equation*}
Using that $V\in L^{3/2}(\R^3)$, we find with the aid of the
Hardy-Littlewood-Sobolev inequality \cite[Thm.~4.3]{LL} that $
\|V^{1/2} M_T^{(1)} |V|^{1/2}\|_2 $ is bounded independently of
$T$. Here, $\|\,\cdot\,\|_2=(\tr |\,\cdot\,|^2)^{1/2}$ denotes the
Hilbert-Schmidt norm.

Note that the integral kernel of $L_T^{(1)}$ is given by
\begin{equation*}
\frac1{2\pi^2} \int_0^{\sqrt{2}} \frac k{(|k^2-1|+2T)
g(|k^2-1|/T) } \frac{\sin k|x-y|}{|x-y|} \,dk \,.
\end{equation*}
We further decompose $L_T^{(1)} = L_T^{(2)} + M_T^{(2)}$, 
where
\begin{equation*}
L_T^{(2)}(x,y)  =
\frac1{2\pi^2} \int_0^{\sqrt{2}} \frac k{(|k^2-1|+2T) } \frac{\sin
  k|x-y|}{|x-y|} \,dk\,.
\end{equation*}
Estimating $|\sin k|x-y|| \leq \sqrt2 |x-y|$ for $k\leq\sqrt 2$ and
changing variables one easily finds
$$
|M_T^{(2)}(x,y)| \leq \frac{\sqrt 2}{2\pi^2} \int_0^{1/T} \frac
1{t+2}\left(\frac1{g(t)}-1\right) \,dt.
$$
This is bounded independently of $T$ since
$1/g(t)-1\sim 2/t$ as $t\to\infty$. Since $V\in L^1(\R^3)$, we
can bound $\|V^{1/2} M_T^{(2)} |V|^{1/2}\|_2 \leq \int |V(x)|dx\, 
\sup_{x,y}|M_T^{(2)}(x,y)|$, and hence we see that also $\|V^{1/2}
M_T^{(2)} |V|^{1/2}\|_2$ is bounded uniformly in $T$.

Finally, we decompose $L_T^{(2)} = L_T^{(3)} + M_T^{(3)}$, 
where
\begin{align*}
  L_T^{(3)}(x,y) & =
  \frac1{2\pi^2} \int_0^{\sqrt{2}} \frac k{(|k^2-1|+2T)} \,dk\,
  \frac{\sin |x-y|}{|x-y|}  
 \\ & \ = \ln\big(1/(2T)+1\big) \frac1{2\pi^2} \frac{\sin |x-y|}{|x-y|}\,.
\end{align*}
Since $|\sin a -\sin b|\leq |a-b|$ one easily sees that 
\begin{equation*}
|M_T^{(3)}(x,y)| \leq \frac1{2\pi^2} \int_0^{\sqrt{2}} \frac k{k + 1
} \,dk\,.
\end{equation*}
Again, since $V\in L^1(\R^3)$, $\|V^{1/2} M_T^{(3)} |V|^{1/2}\|_2$ is
uniformly bounded. This completes the proof.
\hfill\qed

%%%%%%%%%%%%%%%%%%%%%%%%%%%%%%%%%%%%%%%%%%%%%%%%%%%%%%%%%%%%%%%%%%%%%%%%

\end{document}